\documentclass[10pt,conference,letterpaper]{IEEEtran}
\usepackage[letterpaper, left=54pt, right=54pt, top = 54pt, bottom=54pt]{geometry}
\usepackage{hyperref}
\usepackage{amsthm}
\usepackage{epsfig}
\usepackage{xfrac}
\usepackage{amsmath}
\usepackage{cases}
\usepackage{amssymb,mathrsfs,dsfont}
\usepackage{enumerate}
\usepackage{enumitem}
\usepackage{color}
\usepackage{soul}
\usepackage{textcomp}
\usepackage{mathtools}
\usepackage{scrextend}
\usepackage{stackrel}
\usepackage{dsfont}
\usepackage{xparse}
\usepackage{tikz,pgfplots}
\usepackage{kbordermatrix}
\usepackage[utf8]{inputenc}
\usepackage{xfrac}
\setcounter{MaxMatrixCols}{30} 

\newtheorem{theorem}{Theorem}
\newtheorem{lemma}{Lemma}

\newtheorem{corollary}{Corollary}

\newcommand{\bs}[1]{\boldsymbol{#1}}
\newcommand{\mybf}[1]{{\bf #1}}

\newcommand\norm[1]{\left\lVert#1\right\rVert}

\def\md{\mathbb}
\def\eps{\varepsilon}

\def\Expt{\md{E}}

\def\tn{\textnormal}

\newcommand{\dfn}{\stackrel{\tn{def}}{=}}

\def\p2p{point-to-point}

\newcommand{\indfunc}[1]{\mathds{1}\left(#1\right)}
\newcommand{\boldone}{\mathds{1}}
\newcommand{\minhalf}[1]{<\hspace{-1mm}#1\hspace{-1mm}>}

\IEEEoverridecommandlockouts

\setlength{\textfloatsep}{10pt plus 1.0pt minus 2.0pt}
\mathtoolsset{showonlyrefs=true}
\usepackage[noadjust]{cite}

\NewDocumentCommand{\ftheta}{mg}{f\left(#1\mid\IfNoValueTF{#2}{}{#2,}\Theta=+1\right)}

\begin{document}
\title{\vspace{18pt} 
	Interactive Coding for Markovian Protocols}
\author{Assaf Ben-Yishai, Ofer Shayevitz and Young-Han Kim 
	\thanks{A. Ben-Yishai and O. Shayevitz are with the Department of EE--Systems, Tel Aviv University, Tel Aviv, Israel. 
		Y.-H.~Kim is with the Department of Electrical and Computer Engineering, University of California, San Diego, La Jolla, CA 92093 USA.
		Emails: \{assafbster@gmail.com, ofersha@eng.tau.ac.il, yhk@ucsd.edu\}. The work of A. Ben-Yishai was partially supported by an ISF grant no. 1367/14. The work of O. Shayevitz was supported by an ERC grant no. 639573, a CIG grant no. 631983, and an ISF grant no. 1367/14.}}
\maketitle

\begin{abstract}
We address the problem of simulating an arbitrary Markovian interactive protocol over binary symmetric channels with crossover probability $\varepsilon$. We are interested in the achievable rates of reliable simulation, i.e., in characterizing the smallest possible blowup in communications such that a vanishing error probability (in the protocol length) can be attained. Whereas for general interactive protocols the output of each party may depend on \textit{all} previous outputs of its counterpart, in a (first order) Markovian protocol this dependence is limited to the last observed output only. In the special case where there is no dependence on previous outputs (no interaction), the maximal achievable rate is given by the (one-way) Shannon capacity $1-h(\varepsilon)$.  For Markovian protocols, we first show that a rate of $\frac{2}{3}(1-h(\varepsilon))$ can be trivially achieved. We then describe a more involved coding scheme and provide a closed-form lower bound for its rate at any noise level $\varepsilon$. Specifically, we show that this scheme outperforms the trivial one for any $\varepsilon<0.044$, and achieves a rate higher than $\frac{1-h(\varepsilon)}{1+h(\varepsilon)+h\left(<\varepsilon(2-\varepsilon)>\right)}=1-\Theta(h(\varepsilon))$ as $\varepsilon\to 0$, which is order-wise the best possible. This should be juxtaposed with a result of Kol and Raz that shows the capacity for interactive protocols with alternating rounds is lower bounded by $1-O(\sqrt{h(\varepsilon)})$. 
\end{abstract}

\section{Introduction} 
Suppose Alice and Bob would like to communicate using some interactive communication protocol, where at time point $i$ Alice sends the bit $X_i^A$ and Bob then replies with the bit $X_i^B$ (after having observed Alice's transmission). The transcript associated with their protocol is therefore 
\begin{align}
X_1^A,X_1^B,X_2^A,X_2^B,\cdots,X_n^A,X_n^B.
\end{align}
where 
\begin{align}\label{eq:FullInteraction}
X_i^A = f_i^A\left(\mybf{X}_1^{i-1,B}\right);\quad X_i^B =f_i^B\left(\mybf{X}_1^{i,A}\right). 
\end{align}
The \textit{transmission functions} $f_i^A(\cdot)$, $f_i^B(\cdot)$ depend on the time index $i$ and the identity of the speaker (Alice or Bob) and are unknown to the other party. In general, these functions may depend on the entire set of past inputs observed by either Alice or Bob, i.e. $\mybf{X}_1^{i-1,B}$ or $\mybf{X}_1^{i,A}$ respectively. We refer to the transcript $X_1^A,X_1^B,X_2^A,X_2^B,\cdots,X_n^A,X_n^B$ as the \textit{clean transcript}, where "clean" is used to indicate that Alice and Bob receive their counterpart's transmission without any noise.

Suppose now that Alice and Bob are connected through two independent binary symmetric channels (BSCs) with parameter $\eps$. Namely, Alice receives Bob's transmission with additive noise: $Y_i^B = X_i^B + Z_i^B$, and Bob received Alice's transmission with additive noise $Y_i^A = X_i^A + Z_i^A$, where $\{Z_i^A,Z_i^B\}$ are mutually independent Bernoulli i.i.d. sequence with $\Pr(Z_i^A=1)=\Pr(Z_i^B=1)=\eps$ and "$+$" is addition over $\mathbb{GF}(2)$. Alice and Bob would like to devise a coding scheme that would allow them to reliably simulate the clean transcript over the noisy BSCs. Reliable simulation in this context means that for any Markovian protocol, the probability of either Alice or Bob making an error in recovering the clean transcript goes to zero with the transcript length. To that end, they will need to exchange a larger number of bits; the communication rate of their coding scheme is hence defined to be the total number of bits in the clean transcript divided by the total number of channel uses consumed by their scheme. As usual, one is interested in characterizing the \textit{capacity}, namely the maximal rate for which reliable simulation is possible.   

The problem described above was originally introduced and studied by Schulman \cite{schulman1996coding}. In this seminal work, he showed that reliable simulation with a positive rate (i.e., a positive capacity) can be achieved for any $\eps\neq 1/2$. Kol and Raz \cite{kol2013interactive} further studied the problem in the limit of $\eps\to 0$ and introduced a scheme achieving a rate of $1-O(\sqrt{h(\eps)})$ (where $h(\cdot)$ denotes the binary entropy function). They also showed that for a larger class of protocols with non-alternating rounds the rate is upper bounded by $1-\Omega(\sqrt{h(\eps)})$.

 This demonstrated a separation between one-way and interactive communications, as the one-way capacity is given by $1-h(\eps)$. In \cite{haeupler2014interactive}, Haeupler 
examined a more flexible channel model than ours, in which at every time slot Alice and Bob can independently decide if they want to use the channel as a transmitter or as a receiver. This flexibility can potentially lead to collisions, but was shown to eventually increase the achievable rate to $1-O(\sqrt{\eps})$. Haeupler also conjectured that this rate is order-wise tight under adaptive transmission order, i.e., that the rate of any such reliable scheme is upper bounded by $1-\Omega(\sqrt{\eps})$. We note that the general problem of exactly determining the capacity for any fixed $\eps$ in the interactive setup is still wide open. 

In order to better understand the gap between the one-way and interactive setups for $\eps\to 0$, Haeupler and Velingker \cite{haeupler2017bridging} considered a more restrictive family of protocols that are ``less interactive'', where Alice and Bob have some limited average lookahead, i.e., can often speak for a while without requiring further input from their counterpart (hence, can use short error correcting codes). They showed (also for adversarial noise) that when this average lookahead is $\textrm{poly}(1/\eps)$ then the capacity is $1-O(h(\eps))$, i.e., is order-wise the same as the one-way capacity. 

In this work, rather than restricting the ``interactiveness'' of the protocol as above, we restrict the \textit{memory} of the protocol. Specifically, we consider \textit{Markovian protocols} for which the lookahead can be as short as $1$ (highly interactive), but where Alice and Bob need only recall the last bit they have received. 
For these Markovian Protocols, we provide lower bounds for the capacity for all values of $\eps$, and not only in the limit $\eps\to 0$. 

\subsection{Markovian Protocols}
A (first order) Markovian protocol is a protocol in which each party needs to know only the last transmission of its counterpart in order to decide what to send next, and not the entire set of past transmissions. Namely, 
\begin{align}
X_i^A =f_i^A(X_{i-1}^B);\quad X_i^B =f_i^B(X_{i}^A).
\end{align}
where now, in contrast to \eqref{eq:FullInteraction}, the transmission functions  $f_i^A(\cdot)$, $f_i^B(\cdot)$ depend only on what was last received ($X_{i-1}^B$ and $X_{i}^A$ respectively). It is important to note that the non-interactive communication problem is a special case where $f_i^A(\cdot)$, $f_i^B(\cdot)$ are a sequence of constant valued functions that do not depend on the output of the second party.  

The \textit{rate} of any communication scheme that attempts to simulate the clean transcript is defined by
\begin{align}
R = \frac{2n}{\tilde{n}}
\end{align}
where $2n$ is the length of the clean transcript, and $\tilde{n}$ is the number of channel uses required by the scheme. 

The probability of error attained by a scheme is defined to be the maximal probability that either Alice or Bob fail to exactly simulate the clean transcript, where the maximum is taken over all possible Markovian protocols. A sequence of schemes with rate at least $R$ and error probability approaching zero is said to achieve the rate $R$. The capacity for Markovian protocols over BSCs is the supremum over all such achievable rates, and is denoted by $C_{\textrm{Markov}}(\eps)$. Note that $C_{\textrm{Markov}}(\eps)$ cannot exceed the one-way Shannon capacity of the BSC, i.e.,  
\begin{align}\label{eq:shannon_capacity}
  C_{\textrm{Markov}}(\eps) \leq 1-h(\eps), 
\end{align}
as this is the maximal achievable rate for the special case of non-interactive protocols. Below we derive lower bounds on the Markovian capacity. 

\subsection{Main Result}
Our main result is the following. 
\begin{theorem}\label{thrm:mainresult}
The capacity for Markovian protocols over BSCs with crossover probability $\eps$ is lower bounded by
\begin{align}
C_{\textrm{Markov}}(\eps) \geq \max\left\{R_0(\eps),\; \sup_{K,M\in\mathbb{N}}R_1(\eps,K,M)\right\}
\end{align}
where $R_0(\eps) \dfn \frac{2}{3}(1-h(\eps))$, 
\begin{align}
R_1(\eps,K,M) \dfn \frac{1-h(\eps)}{1+h(\eps)+\ell(\eps,K,M)},
\end{align}
and $\ell(\eps,K,M)$ is defined in \eqref{eq:elldef}. The following upper bounds for $\ell(\eps,K,M)$ are easily computable and can be used to lower bound $R_1(\eps,K,M)$. The first bound is
\begin{align}
\ell(\eps,K,M)\leq h\left(\minhalf{\eps(2-\eps)}\right)
\end{align}
(where $\minhalf{x}\dfn\min(x,\tfrac{1}{2})$)
and the second, tighter upper bound is $\ell(\eps,K,M)\leq\check{\ell}(\eps)$ where
\begin{align}
\check{\ell}(\eps)= \sum_{k=1}^\infty (\eps(2-\eps))^2(1-\eps(2-\eps))^{k-1}\log(k+1).
\end{align}
\end{theorem}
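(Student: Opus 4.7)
The theorem contains three lower-bound claims: the ``trivial'' rate $R_0$, the refined rate $R_1(\eps,K,M)$, and the two closed-form upper bounds on $\ell(\eps,K,M)$ that make $R_1$ computable. I would establish them in that order. For $R_0=\frac{2}{3}(1-h(\eps))$, the natural approach is a concatenated construction: on an effectively noiseless binary channel one can emulate the Markov protocol at an overhead of $3$ bits per round --- Alice's transmission, Bob's transmission, and an acknowledgement/parity bit that lets the recipient detect a corrupted symbol and locally retransmit --- and composing this with an inner capacity-achieving BSC code at rate $1-h(\eps)$ (applied over long blocks aggregated across many rounds) yields $3/(1-h(\eps))$ channel uses per simulated round, hence rate $\frac{2}{3}(1-h(\eps))$. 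Vanishing simulation error follows from a standard block-length / union-bound argument, with the block length growing with the protocol length.

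For $R_1$, the denominator $1+h(\eps)+\ell(\eps,K,M)$ suggests a three-term accounting per simulated bit: one bit of protocol content, $h(\eps)$ bits of direct-channel overhead (the Shannon gap), and $\ell(\eps,K,M)$ bits of ``Markov-propagation'' overhead, all delivered at Shannon rate $1-h(\eps)$. I would describe a two-phase scheme realizing this. In Phase~1 Alice and Bob exchange the protocol bits with essentially one BSC use per clean bit, each maintaining a local simulated transcript computed from its own noisy receptions; the parameters $K,M$ tune the ``window'' over which the two local transcripts are allowed to diverge and truncate pathological long error bursts so that the overhead remains finite. Phase~2 is a Slepian--Wolf-style reconciliation in which each party describes its local transcript to the other using a capacity-approaching block code. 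The rate computation reduces to (i) counting per-round uses in each phase and (ii) showing that the conditional entropy per round of one party's simulated transcript given the other's is exactly $\ell(\eps,K,M)$; vanishing error comes from standard typicality / random-coding arguments applied to the reconciliation stage.

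Bounding $\ell(\eps,K,M)$ is the remaining ingredient. Since two consecutive BSC uses (one in each direction) feed into each round of the simulation, a binary ``disagreement indicator'' for Alice's and Bob's local transcripts at a given round has marginal probability $\eps(2-\eps)=1-(1-\eps)^2$, and the scalar-entropy bound $\ell(\eps,K,M)\leq h(\minhalf{\eps(2-\eps)})$ follows from concavity of $h$ applied after regrouping the terms in the definition of $\ell$. For the tighter bound $\check\ell(\eps)$, I would decompose a disagreement episode into an error \emph{run} whose length is geometric with parameter $\eps(2-\eps)$, describe such a run of length $k$ using at most $\log(k+1)$ bits, and sum the resulting series. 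The main obstacle throughout is the Phase~1/Phase~2 design for $R_1$ and precisely tracking how Markov dependencies convert per-use channel noise into downstream transcript disagreements --- that is where the quantity $\ell$ emerges and where most of the technical work lies; once $\ell$ is isolated as an entropy rate, the two explicit upper bounds are comparatively routine calculations.
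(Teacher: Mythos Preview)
Your proposal misses the structural ideas that actually drive both rates in the paper, and in each case the gap is substantive rather than cosmetic.

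For $R_0$, the paper does \emph{not} protect the protocol round-by-round. The point is that a first-order Markov transmission function is one of only four maps $\{0,1\}\to\{0,1\}$, so Alice can describe \emph{all} of her functions $f_1^A,\dots,f_n^A$ with $2n$ bits, send this block non-interactively to Bob with a capacity-achieving code ($2n/(1-h(\eps))$ uses), after which Bob can compute the entire clean transcript and send his $n$ bits back ($n/(1-h(\eps))$ uses). Your ``3 bits per round with a parity/ack bit, wrapped in a block inner code aggregated across rounds'' runs into the basic obstruction of interactive coding: aggregating rounds into long blocks destroys the per-round dependence, and on a genuinely noiseless channel no parity bit is needed in the first place. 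As written, this does not yield a valid scheme.

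For $R_1$, the key observation you are missing is that each Markov function is either \emph{linear} ($X=Y\oplus\xi$) or \emph{constant} (a ``stuck position''). The paper's Scheme~\#2 first runs the protocol over the raw BSCs ($2n$ uses), then has both parties learn \emph{all} the channel error locations via Slepian--Wolf plus feedback (cost $4nh(\eps)/(1-h(\eps))$), and then exchanges only the location of the \emph{first stuck position inside each error-delimited segment} (cost $2n\ell/(1-h(\eps))$). Once errors and first stuck positions are known, the clean transcript is recovered exactly: in linear stretches the errors are simply XORed out, and at a stuck position the state resets regardless of the past. Thus $\ell(\eps,K,M)$ is \emph{not} the conditional entropy of one party's simulated transcript given the other's; it is the (worst-case over the protocol) compression rate of a specific combinatorial object --- the first $1$ of a fixed binary sequence within each run of a Bernoulli$(p)$ process, $p=\eps(2-\eps)$. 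Your Phase~2 ``Slepian--Wolf reconciliation of local transcripts'' does not isolate this quantity and would not produce the denominator $1+h(\eps)+\ell$. Correspondingly, your derivations of the two bounds on $\ell$ are misattributed: $h(\langle p\rangle)$ arises because the number of first-stuck-position indicators is at most the number of error segments ($\approx np$), and $\check\ell$ arises because a segment of length $k$ admits $k{+}1$ outcomes (positions $0,\dots,k{-}1$ or ``none''), costing $\log(k{+}1)$ bits, summed against the geometric segment-length law $p^2(1-p)^{k-1}$ --- not from ``disagreement runs'' of the two transcripts.
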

The rates $R_0(\eps)$ and $R_1(\eps,K,M)$, (with $K=100,M=400$) normalized by the BSC capacity $1-h(\eps)$, are plotted in Fig.~\ref{fig:Rfig}. It can be seen that $R_1(\eps,K,M)$ is superior for $\eps<0.044$, and $R_0(\eps)$ is superior otherwise. Moreover, analyzing $R_1(\eps,K,M)$ for small $\eps$, the following can be shown: 
\begin{corollary} 
For $\eps\to 0$ 
\begin{align}
C_{\mathrm{Markov}}(\eps) =  1-\Theta(h(\eps)).
\end{align}\end{corollary}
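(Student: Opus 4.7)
The plan is to establish matching order-wise upper and lower bounds on $C_{\mathrm{Markov}}(\eps)$ of the form $1-\Theta(h(\eps))$, using only what has already been stated.

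For the upper bound, the Shannon capacity inequality \eqref{eq:shannon_capacity} gives
\begin{align}
C_{\mathrm{Markov}}(\eps) \leq 1 - h(\eps) = 1 - \Theta(h(\eps)) \qquad (\eps\to 0),
\end{align}
with no further work required. The content of the corollary is therefore the matching lower bound $C_{\mathrm{Markov}}(\eps)\ge 1-O(h(\eps))$.

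For the lower bound I would invoke Theorem~\ref{thrm:mainresult}, discarding $R_0(\eps)$ and working only with $R_1(\eps,K,M)$ together with the first (easily computable) upper bound on $\ell$, namely $\ell(\eps,K,M)\le h(\minhalf{\eps(2-\eps)})$. For $\eps$ small enough we have $\eps(2-\eps)<\tfrac12$, so the clipping in $\minhalf{\cdot}$ is inactive and the bound simplifies to $h(\eps(2-\eps))$. The key elementary estimate is then that $h(\eps(2-\eps))=\Theta(h(\eps))$ as $\eps\to 0$. This is immediate from the asymptotic $h(x)\sim -x\log x$ for $x\to 0^+$ applied to both $x=\eps$ and $x=\eps(2-\eps)=2\eps-\eps^2$, giving in fact $h(\eps(2-\eps))/h(\eps)\to 2$; I would state this as a one-line calculation since it requires no machinery beyond $\log(2\eps-\eps^2)=\log(2\eps)+\log(1-\eps/2)$.

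Plugging this into the expression for $R_1$ yields, for any fixed $K,M$ and all small enough $\eps$,
\begin{align}
C_{\mathrm{Markov}}(\eps)\;\ge\; R_1(\eps,K,M) \;\ge\; \frac{1-h(\eps)}{1+h(\eps)+h(\eps(2-\eps))} \;=\; \frac{1-O(h(\eps))}{1+O(h(\eps))} \;=\; 1-O(h(\eps)),
\end{align}
where the last equality is a one-step expansion of $1/(1+u)$ at $u=0$. Combining this with the Shannon upper bound gives $C_{\mathrm{Markov}}(\eps)=1-\Theta(h(\eps))$, as required. The only nontrivial step is the asymptotic $h(\eps(2-\eps))=\Theta(h(\eps))$, but this is a completely standard entropy expansion, so I expect no real obstacle; the corollary is essentially a direct reading of Theorem~\ref{thrm:mainresult} with the looser of the two bounds on $\ell$. (If a sharper leading constant were desired, one could instead use $\check{\ell}(\eps)$, which a short computation shows also satisfies $\check{\ell}(\eps)=\Theta(h(\eps))$, but this is unnecessary for the $\Theta$-statement.)
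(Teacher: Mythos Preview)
Your proposal is correct and follows essentially the same approach as the paper: the paper also combines the trivial Shannon upper bound~\eqref{eq:shannon_capacity} with the lower bound $\sup_{K,M}R_1(\eps,K,M)\ge \frac{1-h(\eps)}{1+h(\eps)+h(\eps(2-\eps))}=1-\Theta(h(\eps))$ obtained from the loosest bound on $\ell$. Your explicit verification that $h(\eps(2-\eps))/h(\eps)\to 2$ is a welcome addition that the paper leaves implicit.
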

In light of the trivial upper bound~\eqref{eq:shannon_capacity}, this rate is order-wise the best possible. Moreover, it is order-wise higher than the lower bound of $1-O(\sqrt{h(\eps)})$ obtained by Kol and Raz \cite{kol2013interactive} for interactive protocols with alternating rounds a non-adaptive transmission schedule.

The remainder of the paper is dedicated to the proof of Theorem~\ref{thrm:mainresult} and its corollary, and is organized follows:
In Subsection~\ref{subsec:scheme1}, we present Scheme \#1, which is a very simple scheme that achieves $R_0(\eps)$. In Subsection~\ref{subsec:scheme2}, we present Scheme \#2 which is more involved and achieves $\sup_{K,M}R_1(\eps,K,M)$ which is larger than $R_0(\eps)$ for any $\eps<0.044$. The analysis of Scheme \#2, including the description of a designated compression protocol, its behavior for large $n$, and numeric evaluation of $R_1(\eps,K,M)$ are given in Section~\ref{sec:analysisscheme2}.

\section{Coding Schemes}
\subsection{Scheme \#1\label{subsec:scheme1}}
We observe that the transmission functions $f_i^A(\cdot),f_i^B(\cdot)$, are binary functions that map a single input bit to a single output bit. We note that there are only four such functions, $\mu_1$, $\mu_2$, $\mu_3$, $\mu_4$ as in the following table:
\begin{align}
\begin{tabular}{|c||*{4}{c|}}
\hline
 &$\mu_1$: & $\mu_2$: & $\mu_3$: & $\mu_4$: \\
$Y$ & $X=Y+ 0$ & $X=Y+ 1$ & $X=0$ & $X=1$ \\
\hline\hline
$0$ & $0$ &  $1$ & $0$ & $1$\\
\hline
$1$ & $1$ &  $0$ & $0$& $1$\\
\hline
\end{tabular}
\end{align}
We observe that $\mu_1$ and $\mu_2$ are \textit{linear}, i.e. $X=Y+ \xi$ and $\xi$ is either $0$ or $1$. $\mu_3$ and $\mu_4$ are constant functions, namely, the output is $0$ or $1$ regardless the input. In the sequel we refer to the locations where $\mu_3$ and $\mu_4$ are used as "stuck positions". 

Having this simple notion stated, we note both and Alice and Bob can compress their own transmission functions using $2n$ bits. We also note that every party, having the transmission functions of its counterpart, can simulate the entire clean transcript. So, we can state the following reliable interaction protocol:
\begin{enumerate}
	\item Alice compresses all her transmission function using $2n$ bits
	\item Alice sends them to Bob using a capacity achieving channel code with rate $1-h(\eps)$. The number of required transmissions from Alice to Bob at this step is $2n/(1-h(\eps)+o(1))$ 
	with error probability $O(1/\text{poly}(n))$.
	\item Bob, having all Alice's transmission functions, can simulate the clean transcript. 
	\item Bob can feed his side of the transcript to Alice, requiring $n$ information bits over a channel with capacity $1-h(\eps)$. So overall $n/(1-h(\eps)+o(1))$ channel uses are needed, 
	with error probability $O(1/\text{poly}(n))$.
\end{enumerate}
So, overall $\tilde{n}=3n/(1-h(\eps)+o(1))$ channel uses are required (with error probability $O(1/\text{poly}(n))$) hence the rate is
\begin{align}
R_0(\eps)= \frac{2n}{3n/(1-h(\eps))}=\frac{2}{3}(1-h(\eps)).
\end{align}

\subsection{Scheme \#2\label{subsec:scheme2}}
The improved achievable rate introduced here is based on running the protocol disregarding the channel errors (as if the channels were clean), followed by several rounds designated to correct the errors.
This scheme is found to be better that to the trivial scheme when the channel noise is low. 
 We start by running the "clean" protocol,
namely Alice and Bob use the Markovian transmission functions on their noisy inputs, $X_i^A = f_i^A(Y_{i-1}^B)$ and  $X_i^B =f_i^B(Y_{i}^A)$ requiring $2n$ channel uses. Then, Alice can describe to Bob the errors of the first round using Slepian-Wolf \cite{slepian1973noiseless} coding protected by a channel code. After this step, the stuck positions are transmitted from side to side using a designated compression algorithm. Finally, the protocol is corrected, using the linearity of the transmission functions in places where they are linear, and reseting at stuck position (as will be elaborated in the sequel).

Let us summarize these steps and give the rate calculation:
\begin{enumerate}
	\item Both parties perform interaction disregarding the channel errors. Overall $2n$ channel uses.
	\item Alice describes Bob the errors that occurred on the channel connecting them (i.e. the channel from Alice to Bob) using Slepian-Wolf coding over a noisy channel. This step requires $ n(h(\eps)+o(1))/(1-h(\eps)+o(1))$ channel uses
	(with error probability $O(1/\text{poly}(n))$). Then Bob feeds the errors back to Alice using simple typical set coding (not Slepian-Wolf). These steps are repeated replacing the roles of Alice and Bob. All in all the channel are used 
	$4n(h(\eps)+o(1))/(1-h(\eps)+o(1))$ times (with error probability $O(1/\text{poly}(n))$). At the end of this step both parties are aware of all channel errors on both sides.		
	\item \label{step3} Bob, knowing all channel errors on both channels divides his interaction functions ,$f_i^A(\cdot)$, into segments that start and end with a channel error (on either channel direction). Then, the first "stuck position" (i.e. $\mu_3$ of $\mu_4$) is conveyed to Alice using the protocol elaborated in Subsection~\ref{sec:stuckpos}. The maximal (i.e. worst case) number of bits used for the description is denoted by $n\ell(\eps)$ and should be conveyed using a capacity achieving channel code requiring $n\ell(\eps)/(1-h(\eps)+o(1))$ channel uses in total. 
	\item \label{step4} Having all this data, Alice can simulate Bob's clean transcript. Assume that from $1\leq i\leq j$ both Alice and Bob have only linear transmission functions. Then, due to the linearity of the transmission at both parties, Bob's clean transcript $\hat{X}_i^B$ can be simulated by canceling the error at both sides:	
	\begin{align}
	\hat{X}_i^B = Y_i^B+ \sum_{l=1}^i Z_{l}^A+ \sum_{l=1}^{i} Z_{i}^B.
	\end{align}
		
	\item \label{step5} Whenever there is a "stuck position" for either party, the processing of previous errors is reset. For example, if Alice receives $Y_i^B=0$, and knows the value of $Z_i^B$ and the fact that
	$f_i^B$ is either $\mu_3$ or $\mu_4$, then $X_i^B=Y_i^B+Z_i^B$, disregarding previous noise values. Note that in non-stuck positions $X_i^B$ is not necessarily equal to $Y_i^B+Z_i^B$. This is because $X_i^B$ is defined as Bob's transmission in the hypothetical noiseless interaction, and not as his transmission in step 1. 

	\item Steps \ref{step3},\ref{step4} and \ref{step5} are repeated by appropriately exchanging the roles of Alice and Bob.
\end{enumerate}
The rate attained by this scheme is therefore
\begin{align}
R_1(\eps,K,M) &=\frac{2n}{2n+n(4h(\eps)+2\ell(\eps,K,M))/(1-h(\eps))}
\\ &=\frac{1-h(\eps)}{1+h(\eps)+\ell(\eps,K,M)}
\end{align}
In the sequel, we will be mostly concerned with the computation of the achievable rate $R_1$. We will also provide a simple lower bound on $R_1$ which is easier to compute, by upper bounding 
 $\check{\ell}(\eps)\geq \ell(\eps)$ (see \eqref{eq:ellcheck}):
\begin{align}
\sup_{K,M}R_1(\eps,K,M) \geq \frac{1-h(\eps)}{1+h(\eps)+\check{\ell}(\eps)}.
\end{align}
The achievable rates are depicted in Fig.~\ref{fig:Rfig}. $R_1$ is computed using $\ell(\eps,K,M)$ and two corresponding upper bounds $\check{\ell}(\eps)$ and a trivial entropy bound that is elaborated in the next section.

It is important to note that for $\eps>0.044$ the description of the errors and stuck positions in scheme \#2 causes it to be less efficients than scheme \#1 as seen in the figure. On the other hand, $\sup_{K,M}R_1(\eps,K,M)$ is better that $R_0(\eps)$ when the channel noise is low and approach $1$ as the $\eps$ go to zero. 
\begin{figure}
	\begin{center}
%
%
\begin{tikzpicture}

\begin{axis}[%
xmin=0,
xmax=0.15,
xtick={0,0.05,0.1,0.15},
xlabel={$\varepsilon$},
xmajorgrids,
ymin=0.3,
ymax=1,
ymajorgrids,
axis background/.style={fill=white},
legend style={legend cell align=left,align=left,draw=white!15!black}
]
\addplot [color=black,solid,line width=1.0pt]
  table[row sep=crcr]{%
0 0.666666666666667\\
0.005	0.666666666666667\\
0.0126315789473684	0.666666666666667\\
0.0202631578947368	0.666666666666667\\
0.0278947368421053	0.666666666666667\\
0.0355263157894737	0.666666666666667\\
0.0431578947368421	0.666666666666667\\
0.0507894736842105	0.666666666666667\\
0.0584210526315789	0.666666666666667\\
0.0660526315789474	0.666666666666667\\
0.0736842105263158	0.666666666666667\\
0.0813157894736842	0.666666666666667\\
0.0889473684210526	0.666666666666667\\
0.096578947368421	0.666666666666667\\
0.104210526315789	0.666666666666667\\
0.111842105263158	0.666666666666667\\
0.119473684210526	0.666666666666667\\
0.127105263157895	0.666666666666667\\
0.134736842105263	0.666666666666667\\
0.142368421052632	0.666666666666667\\
0.15	0.666666666666667\\
};
\addlegendentry{$R_0/(1-h(\varepsilon))$};

\addplot [color=blue,dashed,line width=1.0pt]
  table[row sep=crcr]{%
0 1\\		
0.005	0.932321098471752\\
0.0126315789473684	0.843239333786921\\
0.0202631578947368	0.782401253461246\\
0.0278947368421053	0.737524489979941\\
0.0355263157894737	0.701670508699888\\
0.0431578947368421	0.671699373793775\\
0.0507894736842105	0.645988980357926\\
0.0584210526315789	0.623555934655631\\
0.0660526315789474	0.603735493671447\\
0.0736842105263158	0.58604861656336\\
0.0813157894736842	0.57013614919646\\
0.0889473684210526	0.555721255511242\\
0.096578947368421	0.542585942291193\\
0.104210526315789	0.530555505296965\\
0.111842105263158	0.519487848928373\\
0.119473684210526	0.509265903429939\\
0.127105263157895	0.499792113784815\\
0.134736842105263	0.490984415014525\\
0.142368421052632	0.482773145984384\\
0.15	0.475098706151202\\
};
\addlegendentry{$R_1/(1-h(\varepsilon))$ using $\ell$};

\addplot [color=red,dashdotted,line width=1.0pt]
  table[row sep=crcr]{%
0 1\\
0.005	0.905504486334766\\
0.0126315789473684	0.822981786885182\\
0.0202631578947368	0.76548515004344\\
0.0278947368421053	0.721262662721108\\
0.0355263157894737	0.685508685216725\\
0.0431578947368421	0.655672323636035\\
0.0507894736842105	0.630212803342769\\
0.0584210526315789	0.608121399727921\\
0.0660526315789474	0.588699626533698\\
0.0736842105263158	0.571443220712425\\
0.0813157894736842	0.55597612828637\\
0.0889473684210526	0.542010495264578\\
0.096578947368421	0.529321193769383\\
0.104210526315789	0.517728944791483\\
0.111842105263158	0.507088763081136\\
0.119473684210526	0.497281824035189\\
0.127105263157895	0.488209602205564\\
0.134736842105263	0.479789559564653\\
0.142368421052632	0.471951916413461\\
0.15	0.464637194487272\\
};
\addlegendentry{$R_1/(1-h(\varepsilon))$ using $\check{\ell}$};

\addplot [color=black,dotted,line width=1.0pt]
  table[row sep=crcr]{%
0 1\\
0.005	0.869738067193197\\
0.0126315789473684	0.761691916732077\\
0.0202631578947368	0.689999396292757\\
0.0278947368421053	0.636870031224341\\
0.0355263157894737	0.595181044302708\\
0.0431578947368421	0.561249021245218\\
0.0507894736842105	0.532905084973719\\
0.0584210526315789	0.508762560355629\\
0.0660526315789474	0.487881930148811\\
0.0736842105263158	0.469598318595318\\
0.0813157894736842	0.453424887695311\\
0.0889473684210526	0.438995206617476\\
0.096578947368421	0.426027115407648\\
0.104210526315789	0.414299132440811\\
0.111842105263158	0.403634526383671\\
0.119473684210526	0.393890253735249\\
0.127105263157895	0.384949086677301\\
0.134736842105263	0.376713891727154\\
0.142368421052632	0.369103393859612\\
0.15	0.362048988595575\\
};
\addlegendentry{$R_1/(1-h(\varepsilon))$ using $h$};

\end{axis}
\end{tikzpicture}%
	\end{center}
	\caption{Achievable rates normalized by $1-h(\eps)$. $R_1$ is computed with $K=100,M=400$. \label{fig:Rfig}}
\end{figure}
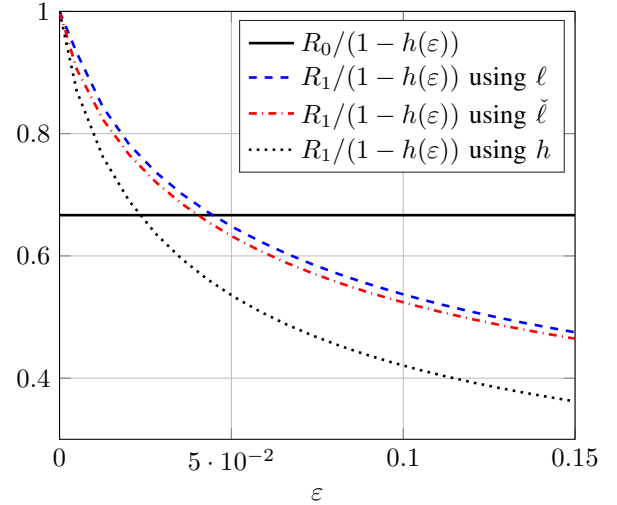

It is of interest to compare this results to \cite{kol2013interactive}. Taking the trivial upper bound $\ell(\eps)\leq h(\minhalf{\eps(2-\eps)})$  given in Subsection~\ref{sec:stuckpos} we can assess the behavior for small $\eps$ by:
\begin{align}
\sup_{K,M}R_1(\eps,K,M)\geq \frac{1-h(\eps)}{1+h(\eps)+h(\eps(2-\eps))}=1-\Theta(h(\eps)),
\end{align}
hence, the capacity for Markovian protocols scales like the Shannon capacity in this limit. 
This should be juxtaposed with the upper bound (for general protocols) of $1-\Omega(\sqrt{h(\eps)})$ given in  \cite{kol2013interactive}. This shows a gap between the capacities of general protocols and Markovian protocols. 
We note that \cite{kol2013interactive} assumes non-adaptive transmission order, which is satisfied by our scheme.

It was shown in \cite{haeupler2014interactive} a higher rate of $1-O(\sqrt{\eps})$ can be achieved for general protocols under adaptive transmission order. This rate is still outperformed by our scheme (for Markovian protocols).

\section{Analysis of Scheme \#2}\label{sec:analysisscheme2}
In this section we analyze the performance of the scheme introduced in \ref{subsec:scheme2}. In particular, we define and analyze a novel compression algorithm designated for the compression of the stuck position.

\subsection{Compression of the Stuck Positions \label{sec:stuckpos}}
We consider the fixed binary sequence  $\bs{\phi}^n=(\phi_1,\ldots,\phi_n)$, $\phi_i\in\{0,1\}$ which describes the "stuck positions" in the original problem. Namely, $\bs{\phi}^n$ describes Bob's "stuck positions", and is equal to $1$ if $f_i^B(\cdot)=\mu_3=0$ or $f_i^B(\cdot)=\mu_4=1$. 
We also consider the i.i.d random sequence $\mybf{z}^n=(z_1,\ldots,z_n)$, $z_i\in\{0,1\}$ with marginal probability $\Pr(z_i=1)=p$, where $p$ is the probability that there is at least one error on the channel from Alice to Bob or vice versa, i.e. $p=1-(1-\eps)^2=\eps(2-\eps)$. 

It is useful to think of the interlaced picture:
\begin{align}
\begin{matrix}
z_1&&z_2&&z_3&&\ldots\\
&\phi_1&&\phi_2&&\phi_3&\ldots
\end{matrix}
\end{align}
The sequence $\mybf{z}^n$ is parsed into segments of the form 
$(1,\mybf{0}^{k-1})$,
$k>0$, where $\mybf{0}^{k-1}$ denotes a row vector of zeros with $k-1$ elements.

We wish to describe the position of the first $\phi_j=1$ in every segment. For example, consider the following interlaced sequence
\begin{align}
\begin{matrix}
\mybf{z}=		&1 &   & 0  &                          & 0, &   & 1 &   & 0, &   & 1, &    & 1 \\
\bs{\phi}=	&  & 0 &	& \text{\textcircled{$1$}} &   & 1 &   & \text{\textcircled{$1$}} &   & 1 &   & \text{\textcircled{$1$}}   &   
\end{matrix}
\end{align}
The parsed segments are separated by commas, and the appearances of the first $\phi=1$ are circled.

First, we note that the total number of the first stuck positions is trivially upper bounded by the number of segments, which is the total number of errors. So, the total number of the first stuck positions is with high probability smaller than $n(p+o(1))$ and can be described via universal compression using less than $n(h(\minhalf{p})+o(1))$ bits. We note that this naive compression method does not use the fact that both sides know the error positions and can take advantage of them in order to improve the compression rate. 

An improved compression algorithm can use the knowledge of the vector $\mybf{z}$ as follows: 
segments of length $k$ are grouped and the empirical distribution of the appearance of the first $1$ is calculated. Then, universal compression is applied for every $k$ based on these distributions. We denote the vector of empirical distribution related to segments of length $k$ by $\bs{\pi}_k=\left\{\pi_{k,l}\right\}_{l=0}^k$. The first $k$ elements of this vector comprise the fraction of these segments that start at some $z_i$, and whose first appearance of $\phi_j=1$ thereafter is at $j=i+l$. $\pi_{k,k}$ is the fraction of the segments that contain no $\phi_j=1$. 

Let $L$ denote the overall length of the stuck positions description (with high probability), normalized by $n$. In the sequel we shall prove that $L$ converges to an asymptotic value $\bar{L}$, which is more easily computable. 

First, we define the empirical distribution $\pi_{k,l}$ (for $0\leq l\leq k$) as the ratio between the counters $N_{k,l}$ and $N_{k}$:
\begin{align}\label{eq:pikl}
\pi_{k,l}=\frac{N_{k,l}}{N_k}.
\end{align}
The counters $N_{k,l}$ are defined as:
\begin{align}
N_{k,l} \dfn \sum_{i=1}^n\boldone_{k,l}(i),
\end{align}
where indicator $\boldone_{k,l}(i)$ for $0\leq l< k$ is one only if and only if
$\mybf{z}_i^{i+k}=(1,\mybf{0}^{k-1},1)$ and 
\begin{align}
\phi_{i+j}=
\begin{cases}
1&\text{for } j=l\\
0&\text{for } 0\leq j<l.
\end{cases}
\end{align}
The indicator $\boldone_{k,k}(i)$ is one only if 
$\mybf{z}_i^{i+k}=(1,\mybf{0}^{k-1},1)$ and $\phi_{i+j}=0$ for $0\leq j< k$.
The denominator of \eqref{eq:pikl} is defined as 
\begin{align}
N_{k} \dfn {\sum_{l=0}^{k} N_{k,l}}=\sum_{i=1}^n\indfunc{\mybf{z}_i^{i+k}=(1,\mybf{0}^{k-1},1)}
\end{align}
where the second equality is by construction. 

Having the counters and the resulting empirical distribution vectors $\bs{\pi}_k$, we can calculate 
the average description length $L$. It is useful to use two schemes, one for $k\leq K_n$ and one for $k>K_n$, with $K_n$ defined in the sequel. For $k<K_n$ we use universal compression which requires for every $k$:
$N_kH\left(\bs{\pi}_k\right)$ bits for the compression where $H(\cdot)$ is the entropy function of a probability vector. Additional bits are also required for the lossless description of the probability vectors $\bs{\pi}_k$ for $k\leq K_n$. We denote this number of bits by $W$.

For $k>K_n$ we describe the location of the first stuck position using the simplifying assumption that $\pi_{k,l}=\frac{1}{k+1}$ (for all $0\leq l\leq k$), shared by both the receiver and transmitter. So, the number of bits for every value of $k$ is $\lceil\log(k+1)\rceil$. All in all, 
the average description length ${L}$ is
\begin{align}
{L} = \frac{1}{n}\left[\sum_{k=1}^{K_n}N_kH\left(\bs{\pi}_k\right)
+W+\sum_{k=K_n+1}^{n}N_k\left\lceil\log(k+1)\right\rceil
\right]
\end{align}

It is useful write $L$ as
\begin{align}\label{eq:Ldef}
{L} = S_1+\frac{W}{n}+S_2
\end{align}	
where 
\begin{align}\label{eq:S1_def}
S_1&\dfn \sum_{k=1}^{K_n}\frac{N_k}{n}H\left(\bs{\pi}_k\right),\\
S_2&\dfn \sum_{k=K_n+1}^{n}\frac{N_k}{n}\left\lceil\log(k+1)\right\rceil.\label{eq:S2_def}
\end{align}

In the sequel we prove that $L$ converges to its asymptotic value by proving that the counters $N_{k,l}$ and $N_k$ converge to their expected values. It is now useful to introduce "spectrum vector" $\{a_m\}$, and write 
$\Expt N_{k,l}$ and $\Expt N_k$ as functions of this vector. Let
\begin{align}\label{eq:amdef}
a_m=\frac{1}{n}\sum_{i=1}^n\indfunc{\phi_{i-m}=1,\bs{\phi}_{i-m+1}^{i-1}=\bs{0}^{m-1},\phi_i=1}
\end{align}
for $m=1,...,n$.
Namely, $a_m$ is the fraction of elements in $\bs{\phi}^n$ which are equal to $1$, and their nearest preceding $1$ in $\mybf{z}^n$ is exactly $m$ time instances earlier. In order to take care of the edge effects we set $\phi_{0}=1$ and $\phi_i=0$ for $i<0$.

Let us now calculate the related expectations:
\begin{align}\label{eq:NkExpt}
\Expt N_k=\sum_{i=1}^{n}\Expt\indfunc{\mybf{z}_i^{i+k}=(1,\mybf{0}^{k-1},1)}=
np^2(1-p)^{k-1}.
\end{align}

And for $0\leq l<k$ 
\begin{align}
&\Expt N_{k,l}
=\sum_{i=1}^n\Expt \left[ \boldone_{k,l}(i)\right]\\
&=\sum_{i=1}^n p^2(1-p)^{k-1}
\indfunc{\bs{\phi}_{i}^{i+l-1}=\mybf{0}^{l-1} \text{ AND } \phi_{i+l}=1 }\\
&\stackrel{(a)}{=}p^2(1-p)^{k-1}n\sum_{m=l+1}^{n} a_m
\end{align}
where $(a)$ follows by counting the number of elements in $\bs{\phi}^n$ that are one, and whose distance to their preceding one is more than $l+1$ (note that $l$ starts at zero).

The calculation of $\Expt N_{k,k}$ is different:
\begin{align}
\Expt N_{k,k}
&= \sum_{i=1}^n\Expt\left[ \boldone_{k,k}(i)\right]\\
&= p^2(1-p)^{k-1}\sum_{i=1}^n\indfunc{\bs{\phi}_i^{i+k-1}=\bs{0}^k}\\
&\stackrel{(a)}{=} p^2(1-p)^{k-1}n\sum_{m=k+1}^{n} a_m(m-k) \label{eq:ENkk}
\end{align}The equality $(a)$ follows by observing that for every $\bs{\phi}$ segment $(1,\bs{0}^{m-1},1)$ there exist $m-k$ placements of a $z$ sequence $(1,\bs{0}^{k-1},1)$ that contain no $\phi=1$. This notion is illustrated below:
\begin{align}
\begin{matrix}
\mybf{z}=&1\overbrace{0\cdots 0}^{k-1}1&\\
\bs{\phi}=&1\underbrace{0\cdots\cdots\cdots 0}_{m-1}1&
\end{matrix}
\end{align}
It is also easy to verify that $\Expt N_k=\sum_{l=0}^k\Expt N_{k,l}$.
%
Let us define the probabilities
\begin{align}
\bar{\pi}_{k,l}&\dfn\frac{\Expt N_{k,l}}{\Expt N_k}\\
&=\begin{cases}
\sum_{m=l+1}^{n} a_m,& \text{ for } 0\leq l< k\\ 
\sum_{m=k+1}^{n} a_m(m-k),&  \text{ for } 0\leq l= k
\end{cases}.\label{eq:pibarkl}
\end{align}
and define $\bar{L}$ based on the definition of $S_1$ in \eqref{eq:S1_def}, replacing $N_k$ with $\Expt N_k$, $\bs{\pi}_{k}$ with $\bar{\bs{\pi}}_{k}$:
\begin{align}
\bar{L} =\frac{1}{n}  \sum_{k=1}^{K_n}np^2(1-p)^{k-1} H\left(\bar{\bs{\pi}}_{k}\right)
\end{align}	
We are now ready to state Theorem~\ref{thrm:Lconverge}.
\subsection{Convergence of the Compression Rate}
\begin{theorem}\label{thrm:Lconverge}
For $K_n=\frac{\beta\ln n}{-\ln(1-p)}$ 
for every $\eps>0$ 
\begin{align}
\lim_{n\to \infty}\Pr\left(L>\bar{L}+\eps\right)=0.
\end{align}

\end{theorem}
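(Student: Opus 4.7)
The plan is to decompose $L-\bar L = (S_1-\bar L)+W/n + S_2$ using \eqref{eq:Ldef} and bound each term separately. The parameter $\beta$ inside $K_n$ is to be chosen in $(0,1/2)$ (say $\beta=1/4$), small enough that the geometric tail beyond $K_n$ is polynomially negligible, yet large enough that every expected count $\Expt N_k$ for $k\le K_n$ still grows polynomially in $n$ so the counters concentrate.

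For the tail, \eqref{eq:NkExpt} and linearity give $\Expt S_2 = \sum_{k>K_n}p^2(1-p)^{k-1}\lceil\log(k+1)\rceil$; since $(1-p)^{K_n}=n^{-\beta}$, this residual sum is $O(n^{-\beta}\log\log n)$, so Markov's inequality yields $\Pr(S_2>\eps/3)\to 0$. For the codebook overhead, the vector $\bs{\pi}_k$ is completely determined by the integer counters $N_{k,0},\ldots,N_{k,k}\in\{0,\ldots,n\}$; losslessly describing all of them for every $k\le K_n$ costs at most $O(K_n^2\log n)=O(\log^3 n)$ bits, so $W/n\to 0$ deterministically.

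The main content is the concentration of $S_1$ on $\bar L$. I would split
\begin{align}
S_1-\bar L=\sum_{k=1}^{K_n}\!\left[\tfrac{N_k}{n}-p^2(1-p)^{k-1}\right]\!H(\bs{\pi}_k)+\sum_{k=1}^{K_n}p^2(1-p)^{k-1}\!\left[H(\bs{\pi}_k)-H(\bar{\bs{\pi}}_k)\right],
\end{align}
bound $H(\bs{\pi}_k)\le\log(k+1)$ in the first sum, and use a Fannes-type continuity estimate (small $\ell_1$ deviation $\|\bs{\pi}_k-\bar{\bs{\pi}}_k\|_1$ implies an entropy gap of order $\|\bs{\pi}_k-\bar{\bs{\pi}}_k\|_1\log(k+1)+o(1)$) in the second. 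Both sums are then controlled once $|N_k-\Expt N_k|/n$ and $|N_{k,l}-\Expt N_{k,l}|/n$ are made uniformly small in $(k,l)$. Since $\boldone_{k,l}(i)$ and $\boldone_{k,l}(j)$ depend on disjoint blocks of $\mybf{z}^n$ whenever $|i-j|>k$, the counters are $(k+1)$-dependent sums, for which the standard variance bound is $\mathrm{Var}(N_{k,l})=O(k\,\Expt N_{k,l})$. Chebyshev's inequality combined with a union bound over the $O(K_n^2)=O(\log^2 n)$ pairs $(k,l)$ then gives a uniform absolute deviation of $o(1/\log^2 n)$, which after multiplication by the $\log(k+1)$ factors appearing in both sums is still $o(1)$ when summed over $k\le K_n$.

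The main obstacle is this uniformity: for $k$ near $K_n$ the expected count $\Expt N_k=\Theta(n^{1-\beta})$ is small, so the relative concentration degrades and the union bound becomes delicate. This is precisely where $\beta<1/2$ is used, while at the same time $\beta>0$ is needed to force $\Expt S_2\to 0$. Balancing these two competing constraints through the choice of $\beta$ is the only subtle step; everything else is a routine combination of concentration for $(k+1)$-dependent sums and weak continuity of entropy.
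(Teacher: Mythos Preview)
Your decomposition, the handling of $W/n$ and $S_2$, and the use of entropy continuity to pass from $H(\bs{\pi}_k)$ to $H(\bar{\bs{\pi}}_k)$ all mirror the paper's argument. The one genuine difference is the concentration tool: the paper proves Lemma~\ref{lemma:concentration} via McDiarmid's bounded difference inequality (flipping a single $z_i$ changes each counter by at most $2$), obtaining sub-Gaussian tails $\exp(-t^2/16n)$ uniformly in $k,l$, whereas you appeal to the $(k{+}1)$-dependence of the indicators and Chebyshev. Both routes work; McDiarmid is cleaner because it bypasses any variance computation and the dependence range plays no role, while your approach is more elementary.

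One quantitative slip to tighten: a uniform absolute deviation of $o(1/\log^2 n)$ for $N_{k,l}/n$ is \emph{not} by itself enough for your second sum, since converting to $\|\bs{\pi}_k-\bar{\bs{\pi}}_k\|_1$ requires dividing by $N_k/n\asymp p^2(1-p)^{k-1}$, which is as small as $n^{-\beta}$ near $k=K_n$. What you actually need (and what Chebyshev in fact delivers) is an absolute deviation of order $n^{-1/2}\mathrm{polylog}(n)$, which is $o(n^{-\beta}/K_n)$ precisely when $\beta<1/2$. You identify this correctly in your last paragraph, so the plan is sound; just make the intermediate bound consistent with it. The paper handles the same issue by setting $t=n^\alpha$ with $\alpha\in(1/2,1)$ and requiring $\beta<1-\alpha$, which is the same constraint in disguise.
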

Recalling \eqref{eq:Ldef}, $L$ is composed of three elements : $S_1$, $W/n$ and $S_2$. Proving that $W/n$ and $S_2$ converge to zero is simple and is deferred to the end of this subsection. Analyzing the convergence of $S_1$ is more involved and is now handled. The proof is based on two elements: the convergence of the counters $N_{k,l}$ to their expected value (Lemma~\ref{lemma:concentration}), and the smoothness of the entropy function (Lemma~\ref{lemma:smoothH}). Let us start by giving the lemmas and then use them to prove the theorem. 

\begin{lemma}\label{lemma:concentration}
The following inequalities hold any $t\geq 0$:
\begin{align}
&\Pr\left(\left| {N}_{k,l}-\Expt {N}_{k,l} \right|\geq t\right)
\leq 2\exp\left(-\tfrac{t^2}{16n}\right) \label{eq:lemma1:eq1}\\
&\Pr\left( \left|{N}_{k}-\Expt {N}_{k}\right| \geq t\right)
\leq 2\exp\left(-\tfrac{t^2}{16n}\right)\label{eq:lemma1:eq3}\\
&\Pr\left(\sum_{k=K+1}^{n} {N}_{k}-\Expt \sum_{k=K+1}^{n}{N}_{k} \geq t\right)
\leq \exp\left(-\tfrac{t^2}{4n}\right).\label{eq:lemma1:eq5}\\
\end{align}
\end{lemma}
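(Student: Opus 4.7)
The plan is to apply the method of bounded differences (equivalently, Azuma--Hoeffding on the Doob martingale built from $z_1,\ldots,z_n$) to each of the three statistics, treating the deterministic sequence $\boldsymbol{\phi}^n$ as fixed and the $z_i$'s as the only source of randomness. Since every stated bound is Gaussian in $t$ with variance proxy of order $n$ \emph{uniformly in $k$}, what needs to be shown is that each statistic has $O(1)$ sensitivity under flipping a single coordinate $z_j$.

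For \eqref{eq:lemma1:eq1} and \eqref{eq:lemma1:eq3}, the sensitivity analysis is the heart of the matter. A flip of $z_j$ formally enters up to $k+1$ indicators $\boldone_{k,l}(i)$ (those with $j\in[i,i+k]$), so a naive bound would be $c_j = k+1$ and would destroy uniformity in $k$. However, most of these indicators cannot actually toggle. A flip $z_j:0\to 1$ can create at most two new matches --- one with $z_j$ in the role of the leading $1$ (at $i=j$) and one with $z_j$ in the role of the trailing $1$ (at $i=j-k$) --- and can destroy at most one pre-existing match in which $z_j$ lay inside the middle-zeros block. The latter ``at most one'' is the key point: two simultaneous matches with start positions $i<i'\leq i+k-1$ would impose the contradictory requirements $z_{i'}=0$ (from the middle of the $i$-pattern) and $z_{i'}=1$ (from the leading $1$ of the $i'$-pattern). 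A symmetric argument handles $z_j:1\to 0$. This gives a uniform-in-$k$ sensitivity bound, and McDiarmid's two-sided inequality then yields \eqref{eq:lemma1:eq1}. Since $N_k=\sum_l N_{k,l}$ is an indicator count over the same windows (with the $\phi$-condition dropped), the identical argument gives \eqref{eq:lemma1:eq3}.

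For \eqref{eq:lemma1:eq5}, I would reinterpret $\sum_{k=K+1}^{n} N_k$ as the number of consecutive pairs of $1$'s in $\mathbf{z}^n$ separated by more than $K$ zeros; equivalently, the number of ``long gaps'' between successive ones. Flipping a single $z_j$ either splits a gap (by inserting a $1$) or merges two gaps (by deleting a $1$), and in either case the number of gaps of length $>K$ changes by at most $2$. A one-sided application of McDiarmid then gives the stated $\exp(-t^2/(4n))$ bound, with no factor of $2$ in front since only the upper deviation is claimed.

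The main obstacle is precisely the sensitivity argument for $N_{k,l}$: the naive accounting $c_j=k+1$ would produce an exponent of order $t^2/(nk^2)$, which becomes useless as we later take $k$ up to $K_n=\Theta(\log n)$. The combinatorial observation that overlapping ``$1,0^{k-1},1$'' patterns are mutually exclusive along the $\mathbf{z}$ axis is what collapses the sensitivity from $O(k)$ to $O(1)$ and underlies the $k$-independent exponent $t^2/(16n)$.
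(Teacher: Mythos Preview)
Your proposal is correct and follows essentially the same route as the paper: apply the bounded difference (McDiarmid) inequality to each statistic viewed as a function of the i.i.d.\ sequence $\mathbf{z}^n$, after establishing an $O(1)$ single-coordinate sensitivity that is uniform in $k$. The paper cites the one-sided form of the inequality from Van Handel and simply asserts that flipping one $z_j$ changes $N_{k,l}$ (and $N_k$) by at most $2$, exhibiting the extremal configuration $(\ldots,1,\mathbf{0}^{k-1},z_j,\mathbf{0}^{k-1},1,\ldots)$; your overlap-exclusion argument (two patterns with starts $i<i'\le i+k-1$ are incompatible) is a more explicit justification of the same fact, and the $\phi$-filtering for $N_{k,l}$ only shrinks the set of indicators that can toggle.

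Two small remarks. First, for \eqref{eq:lemma1:eq5} your bound ``changes by at most $2$'' is looser than necessary: when a flip merges two gaps into one (or splits one into two), the count of long gaps actually changes by at most $1$ in absolute value, which is what the paper uses to land exactly on $\exp(-t^2/(4n))$. Your bound of $2$ still suffices, since standard McDiarmid with $c_j\le 2$ gives $\exp(-t^2/(2n))\le \exp(-t^2/(4n))$. Second, even if one reads your ``create $\le 2$, destroy $\le 1$'' as $c_j\le 3$, the resulting $\exp(-2t^2/(9n))$ is still stronger than the stated $\exp(-t^2/(16n))$; but in fact creation and destruction cannot co-occur (as your own overlap argument implies), so $c_j\le 2$ as in the paper.
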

\begin{proof}
The proof is based on a straightforward application of the bounded difference inequality. We start by citing the inequality:
\begin{theorem}[\textit{Bounded difference inequality} {\cite[Theorem 3.18]{VanHandel}} ]
	
	 Let $\mybf{x}^n$ be a random independent series, and $f(\mybf{x}^n)$ a scalar function, then for any $t\geq 0$ the following hold:
	\begin{align}	
	\Pr(f(\mybf{x}^n)-\Expt &f(\mybf{x}^n)\geq t ) \\&\leq 
	\exp\left(-\tfrac{t^2}{4\norm{\sum_{i=0}^{n}|D_i^- f|^2}_{\infty}}\right)\label{bdi1}\\
	\Pr(f(\mybf{x}^n)-\Expt& f(\mybf{x}^n)\leq  -t) \\&\leq  
\exp\left(-\tfrac{t^2}{4\norm{\sum_{i=0}^{n}|D_i^+ f|^2}_{\infty}}\right).\label{bdi2}
	\end{align}
where 	
\begin{align}
D_i^{-} f \dfn f(\mybf{x}^n)-\inf_{x} f(\mybf{x}^{i-1},x,\mybf{x}_{i+1}^n)
\end{align}
and
\begin{align}
D_i^{+} f \dfn \sup_{x} f(\mybf{x}^{i-1},x,\mybf{x}_{i+1}^n) - f(\mybf{x}^n).
\end{align}
\end{theorem}

We shall use the theorem by setting $f(\mybf{x}^n)={N}_{k,l}$ where $\mybf{x}^n$ is the noise series $\mybf{z}^n$ (i.i.d $\text{Ber}(p)$). Using this, the elements of $\mybf{z}^n$ determine the error segments in which the counters ${N}_{k,l}$ are calculated. We observe that changing a single element of $\mybf{z}^n$ can leave the number of segments unchanged or change them by at most two. The maximal change is achieved in the following situation:
\begin{align}
\mybf{z}^n=(\ldots,1,\mybf{0}^{k-1},x_i,\mybf{0}^{k-1},1\ldots)
\end{align}
in which changing $x_i$ from zero to one (respectively from one to zero) will increase (respectively decrease) the number of segments by two. Since changing the number of segments by two will change the counter $N_{k,l}$ by at most two we can conclude that
$D_i^+f\leq 2$ and $D_i^-f\leq 2$, and  
\begin{align}
\norm{\sum_{i=0}^{n}|D_i^- f|^2}_{\infty}\leq n\left(2\right)^2=4n
\end{align}
and similarly $\norm{\sum_{i=0}^{n}|D_i^+ f|^2}_{\infty}$ can be bounded by the same value. Using this and \eqref{bdi1} and \eqref{bdi2} we obtain \eqref{eq:lemma1:eq1}. 
Note that the same bound and the same argument also holds for the total number of segments $N_k$ expressed in \eqref{eq:lemma1:eq3}.

Finally, \eqref{eq:lemma1:eq5} follows from the fact that changing $x_i$ from one to zero will create at most one new segment with $k\geq K+1$. So $D_i^+\leq 1$ and $\norm{\sum_{i=0}^{n}|D_i^+ f|^2}_{\infty}\leq n$.
\end{proof}

\begin{lemma}[{\cite[Lemma~2.7]{CsiszarKorner}}]\label{lemma:smoothH}
	If $d(P,Q)=\Theta\leq \frac{1}{2}$ then 
	\begin{align}
	|H(P)-H(Q)|\leq -\Theta\log\frac{\Theta}{|\mathcal{X}|}
	\end{align}
	where $d_{\mathrm{TV}}(P,Q)$ is the \textit{total variation distance} between the distributions $P$ and $Q$ on $\mathcal{X}$:
	\begin{align}
	d_{\mathrm{TV}}(P,Q)\dfn \sum_{x\in \mathcal{X}}|P(x)-Q(x)|.
	\end{align} 
\end{lemma}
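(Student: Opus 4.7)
The plan is to reduce the claim to a one-dimensional modulus-of-continuity bound for the function $\eta(t)\dfn -t\log t$ on $[0,1]$ (with $\eta(0)\dfn 0$), and then to aggregate the coordinate-wise contributions via the maximum-entropy inequality. Writing $H(P)-H(Q)=\sum_{x\in\mathcal{X}}[\eta(P(x))-\eta(Q(x))]$ and setting $r_x\dfn|P(x)-Q(x)|$, the triangle inequality yields
\begin{align}
|H(P)-H(Q)|\leq\sum_{x\in\mathcal{X}}|\eta(P(x))-\eta(Q(x))|.
\end{align}
Since $\sum_x r_x=\Theta\leq \tfrac{1}{2}$, each individual gap satisfies $r_x\leq\tfrac{1}{2}$, so a pointwise inequality of the form $|\eta(a)-\eta(b)|\leq\eta(|a-b|)$ valid for $|a-b|\leq\tfrac{1}{2}$ will reduce matters to estimating $\sum_x\eta(r_x)$.

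The first substantive step is the scalar bound $|\eta(a+r)-\eta(a)|\leq\eta(r)$ whenever $a\in[0,1-r]$ and $r\in[0,\tfrac{1}{2}]$. I would prove this by studying $g(a)\dfn\eta(a+r)-\eta(a)$ directly: its derivative satisfies $g'(a)=\log(a/(a+r))<0$, so $g$ is strictly decreasing on $[0,1-r]$, with $g(0)=\eta(r)$ and $g(1-r)=-\eta(1-r)$. Hence $|g(a)|\leq\max(\eta(r),\eta(1-r))$, and the auxiliary symmetry inequality $\eta(1-r)\leq\eta(r)$ for $r\in[0,\tfrac{1}{2}]$ finishes the step. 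This auxiliary inequality can be shown by noting that $\varphi(r)\dfn\eta(r)-\eta(1-r)$ satisfies $\varphi(0)=\varphi(\tfrac{1}{2})=0$ and has a derivative $\varphi'(r)=-\log(r(1-r))-2\log e$ that changes sign exactly once on $(0,\tfrac{1}{2})$, from positive to negative, forcing $\varphi\geq 0$ throughout the interval.

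The second step is to pass from the coordinate sum to the target bound. Factoring out $\Theta$, one treats $(r_x/\Theta)_{x\in\mathcal{X}}$ as a probability vector on $\mathcal{X}$, giving
\begin{align}
\sum_{x}\eta(r_x)=-\Theta\log\Theta+\Theta\sum_{x}\frac{r_x}{\Theta}\log\frac{\Theta}{r_x}\leq -\Theta\log\Theta+\Theta\log|\mathcal{X}|=-\Theta\log\frac{\Theta}{|\mathcal{X}|},
\end{align}
where the inequality uses that the entropy of any distribution on $\mathcal{X}$ is at most $\log|\mathcal{X}|$. Combining the two steps yields exactly the stated bound.

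The only genuine obstacle is the scalar modulus-of-continuity inequality of step one; the triangle inequality reduction and the maximum-entropy aggregation are routine. Once the monotonicity of $g$ is established and the endpoint comparison $\eta(1-r)\leq\eta(r)$ is verified on $[0,\tfrac{1}{2}]$, the rest of the argument is pure bookkeeping. The hypothesis $\Theta\leq\tfrac{1}{2}$ is used in exactly one place, namely to ensure every $r_x$ lies in the regime where the pointwise modulus-of-continuity bound applies.
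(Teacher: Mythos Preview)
Your argument is correct and is in fact the standard proof from Csisz\'ar--K\"orner. The paper does not supply its own proof of this lemma; it merely cites the reference, so there is nothing further to compare.
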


Having these two lemmas at hand, we are ready to prove the Theorem~\ref{thrm:Lconverge}.
\begin{proof}[Proof of Theorem~\ref{thrm:Lconverge}]
We start by proving that $S_1$ is asymptotically upper bounded by $\bar{L}$. We first upper bound $|H\left(\bs{\pi}_k\right)-H\left(\bar{\bs{\pi}}_k\right)|$ by upper bounding the variation distance $\Theta=|\bs{\pi}_k-\bar{\bs{\pi}}_k|$. Using \eqref{eq:lemma1:eq1} and \eqref{eq:lemma1:eq3} and the union bound it follows that
\begin{align}
\Pr\left(\frac{N_{k,l}}{N_k}\geq \frac{\Expt N_{k,l}+t}{\Expt N_k-t}\right)\leq 
4\exp\left(-\tfrac{t^2}{16n}\right).
\end{align}
where $t<\Expt N_k$. 
Using \eqref{eq:pikl}, \eqref{eq:pibarkl} and the fact that $\Expt N_k=np^2(1-p)^{k-1}$ we can also write 
\begin{align}
\Pr&\left(\frac{N_{k,l}}{N_k}\geq \frac{\Expt N_{k,l}+t}{\Expt N_k-t}\right)=
\Pr\left(\pi_{k,l}\geq \tfrac{np^2(1-p)^{k-1}\bar{\pi}_{k,l}+t}{np^2(1-p)^{k-1}-t}\right)\\
=&\Pr\left(\pi_{k,l}-\bar{\pi}_{k,l}\geq \tfrac{t(\bar{\pi}_{k,l}+1)}{np^2(1-p)^{k-1}-t}\right)
\end{align}
and hence
\begin{align}\label{eq:piklupper}
\Pr\left(\pi_{k,l}-\bar{\pi}_{k,l}\geq \tfrac{t(\bar{\pi}_{k,l}+1)}{np^2(1-p)^{k-1}-t}\right)\leq 4\exp\left(-\tfrac{t^2}{16n}\right). 
\end{align}
Similarly
\begin{align}
&\Pr\left(\frac{N_{k,l}}{N_k}\leq \frac{\Expt N_{k,l}-t}{\Expt N_k+t}\right)\\
&=\Pr\left(\pi_{k,l}-\bar{\pi}_{k,l}\leq -\tfrac{t(\bar{\pi}_{k,l}+1)}{np^2(1-p)^{k-1}+t}\right)\\
&\leq 4\exp\left(-\tfrac{t^2}{16n}\right).\label{eq:piklower}
\end{align}
Combining \eqref{eq:piklupper} and \eqref{eq:piklower} taking into account that \eqref{eq:piklupper} is tighter, we obtain
\begin{align}
\Pr\left(\left|\pi_{k,l}-\bar{\pi}_{k,l}\right|\geq \tfrac{t(\bar{\pi}_{k,l}+1)}{np^2(1-p)^{k-1}-t}\right)
\leq 8\exp\left(-\tfrac{t^2}{16n}\right).
\end{align}
Summing up for $l=1,\ldots,k$ and using the fact that $\sum_{i=0}^k\bar{\pi}_{k,l}=1$ we get the following inequality for the variation distance $\Theta=d_{\mathrm{TV}}(\bs{\pi}_{k},\bar{\bs{\pi}}_{k})$:
\begin{align}
\Pr\left(\Theta\geq \tfrac{t(k+2)}{np^2(1-p)^{k-1}-t}\right)
\leq 8(k+1)\exp\left(-\tfrac{t^2}{16n}\right).
\end{align}
Now, we can set $t=n^\alpha$ with $\alpha\in(\frac{1}{2},1)$ and obtain
\begin{align}
\Pr\bigg(\Theta\geq \tfrac{k+2}{n^{1-\alpha}p^2(1-p)^{k-1}+1}\bigg)&\leq 8(k+1)\exp\left(-\tfrac{n^{2\alpha-1}}{16}\right).
\end{align}
Finally, Lemma~\ref{lemma:smoothH} implies that
\begin{align}\label{eq:Hconvergence}
\Pr\bigg(| H({\bs{\pi}}_{k})- H(\bar{\bs{\pi}}_{k})|&\geq \eps_k
\bigg)\\
&\leq 8(k+1)\exp\left(-\tfrac{n^{2\alpha-1}}{16}\right).
\end{align}
where
\begin{align}
\eps_k\dfn -\tfrac{k+2}{n^{1-\alpha}p^2(1-p)^{k-1}-1}\log\left(\tfrac{(k+2)/(k+1)}{n^{1-\alpha}p^2(1-p)^{k-1}-1}\right).
\end{align}
Clearly, for any fixed $k$ we have that $\eps_k\stackrel{n\to \infty}{\longrightarrow} 0$. 
Trivially, the upper side of the bound in \eqref{eq:Hconvergence} also holds:
\begin{align}\label{eq:Hkupperbound}
\Pr( H({\bs{\pi}}_{k})- H(\bar{\bs{\pi}}_{k})\geq \eps_k) \leq 8(k+1)\exp(-\tfrac{n^{2\alpha-1}}{16}).
\end{align}

Let us now bound the summands of $S_1$. Using \eqref{eq:Hkupperbound}, \eqref{eq:lemma1:eq1} and \eqref{eq:NkExpt} and the union bound implies that
\begin{align}
&\Pr\left( \tfrac{N_k}{n} H\left({\bs{\pi}}_{k}\right) \geq (p^2(1-p)^{k-1}+n^{\alpha-1})\left(H\left(\bar{\bs{\pi}}_{k}\right)+\eps_k\right)
\right)\\
&\leq (8k+10)\exp\left(-\tfrac{n^{2\alpha-1}}{16}\right).\label{eq:NknHpibound}
\end{align}

Rearranging \eqref{eq:NknHpibound} we obtain
\begin{align}
\Pr\bigg( \frac{N_k}{n} &H\left({\bs{\pi}}_{k}\right) 
-p^2(1-p)^{k-1}H\left(\bar{\bs{\pi}}_{k}\right)\\
\geq& H(\bar{\bs{\pi}}_{k})n^{\alpha-1}+
 (p^2(1-p)^{k-1}+n^{\alpha-1})\eps_k \bigg)\\
&\leq (8k+10)\exp\left(-\tfrac{n^{2\alpha-1}}{16}\right).
\end{align}
Summing up for $k=1,\ldots,K_n$ and noticing that the following bounds hold for $1\leq k\leq K_n$:
\begin{align}
&p^2(1-p)^{k-1}\leq p^2\\
& H(\bar{\bs{\pi}}_k)\leq \log(K_n+1) \\
& \eps_k\leq \eps_{K_n}
\end{align}
we obtain
\begin{align}\label{eq:S1convergence}
\Pr\left( S_1-\bar{L}\geq {\epsilon_1} \right)\leq { \delta_1}.
\end{align}
where
\begin{align}
\epsilon_1\dfn K_n\log(K_n+1)n^{\alpha-1}+(p^2+n^{\alpha-1})K_n\eps_{K_n}
\end{align}
and
\begin{align}
\delta_1 \dfn K_n(8K_n+10)\exp\left(-\tfrac{n^{2\alpha-1}}{16}\right).
\end{align}

Setting 
\begin{align}\label{eq:Kndef}
K_n=\frac{\beta\ln n}{-\ln(1-p)}
\end{align}
with $\beta\in (0,1-\alpha)$ yields
\begin{align}\label{eq:Knsetting}
(1-p)^{K_n}=n^{-\beta}
\end{align}
which assures that $\eps_{K_n}\stackrel{n\to \infty}{\longrightarrow} 0$  and also 
$\epsilon_1 \stackrel{n\to \infty}{\longrightarrow} =0$ and $\delta_1 \stackrel{n\to \infty}{\longrightarrow} 0$. 

Let us now prove that $S_2$ converges to zero in probability. For $k>K_n$ we describe the location of the first stuck position using $\left\lceil\log(k+1)\right\rceil$ bits for every value of $k$. Therefore
\begin{align}
S_2&= \sum_{k=K_n+1}^{n}\frac{N_k}{n}\left\lceil\log(k+1)\right\rceil\\
&\leq \sum_{k=K_n+1}^{n}\frac{N_k}{n}\left(\log(k+1)+1\right)\\
&\leq \left(\sum_{k=K_n+1}^{n}\frac{N_k}{n}\right)\left(\log(n+1)+1\right) \label{eq:NknHbound}
\end{align}
Recalling \eqref{eq:lemma1:eq5} and using only the upper side of the bound 
\begin{align}
\Pr\bigg(\sum_{k=K_n+1}^{n} \frac{{N}_{k}}{n}\geq  \sum_{k=K+1}^{n}p^2(1-p)^{k-1} &+ \frac{t}{n}\bigg)\\
&\leq 2\exp\left(-\tfrac{t^2}{4n}\right). 
\end{align}
and noticing that 
\begin{align}
\sum_{k=K+1}^{n}p^2(1-p)^{k-1}\leq \sum_{K+1}^{\infty}p^2(1-p)^{k-1} =p(1-p)^{K_n}
\end{align}
 we have
\begin{align}
\Pr\left(\sum_{k=K_n+1}^{n} \frac{{N}_{k}}{n}\geq  p(1-p)^{K_n} + \frac{t}{n}\right)
\leq 2\exp\left(-\tfrac{t^2}{4n}\right)
\end{align}
setting as before $t=n^{\alpha}$ with $\alpha\in(\frac{1}{2},1)$ and recalling \eqref{eq:Knsetting} we obtain
\begin{align}
\Pr\left(\sum_{k=K_n+1}^{n} \frac{{N}_{k}}{n}\geq  pn^{-\beta} + n^{\alpha-1}\right)
\leq 2\exp\left(-\tfrac{n^{2\alpha-1}}{4}\right)
\end{align}

Now, we use \eqref{eq:NknHbound} and further loosen the bound, obtaining
\begin{align}
\Pr\bigg(S_2\geq  (pn^{-\beta} &+ n^{\alpha-1})
\left(\log(n+1)+1\right)
\bigg)\\
&\leq 2\exp(-\tfrac{n^{2\alpha-1}}{4}).
\end{align}
Defining 
\begin{align}
{\epsilon_2}\dfn{\left(pn^{-\beta} + n^{\alpha-1}\right)\log(n+1)}
\end{align}
and 
\begin{align}
\delta_2\dfn{2\exp\left(-\tfrac{n^{2\alpha-1}}{4}\right)}.
\end{align}
we obtain 
\begin{align}\label{eq:S2convergence}
\Pr\left(S_2
\geq  \epsilon_2\right)
\leq \delta_2
\end{align}
where clearly $\epsilon_2 \stackrel{n\to \infty}{\longrightarrow} =0$ and $\delta_2 \stackrel{n\to \infty}{\longrightarrow} 0$. 

Lastly, we can show that $W/n$ converges to zero in probability by representing the values in $\bs{\pi}_k$ for  $k=1,\ldots,K_n$ using $\log(K_n)$ bits each. There are overall $\sum_{k=1}^{K_n}(k+1)=K_n(K_n+3)/2$ such elements so, the total number of required bits is $W = K_n(K_n+3)\log(K_n+1)$. Setting $K_n$ as in \eqref{eq:Kndef} clearly yields $W/n\stackrel{n\to \infty}{\longrightarrow} =0$. Combining this, \eqref{eq:S1convergence}, \eqref{eq:S2convergence} and applying the union concludes the proof. 
\end{proof}

\subsection{Numerical Evaluations of the Compression Rate}
In the previous subsection, we proved that $L$ is asymptotically upper bounded by $\bar{L}$. However, $\bar{L}$ is a function of the spectrum vector $\mybf{a}$. Therefore, an upper bound for $\bar{L}$ should be related to the maximization of $\bar{L}$ w.r.t $\mybf{a}$. In this subsection we explicitly write this (convex) optimization problem, and provides some numeric evaluations. We note that $\mybf{a}$ is a vector of length $n\to \infty$. Since our optimization tools are limited to vectors with finite dimension, we limit the size of $\mybf{a}$, and bound the residue inflicted by this process. 

We start by recalling \eqref{eq:Ldef}: ${L} = S_1+\frac{W}{n}+S_2$,
however, in contrast to the definitions in \eqref{eq:S1_def} and \eqref{eq:S2_def}, we define $S_1$ and $S_2$ with $K$ that is a fixed number, and not an increasing function in $n$. Namely
\begin{align}
S_1&= \sum_{k=1}^{K}\frac{N_k}{n}H\left(\bs{\pi}_k\right),\\
S_2&= \sum_{k=K+1}^{n}\frac{N_k}{n}\left\lceil\log(k+1)\right\rceil.
\end{align}
Having a fixed $K$, the number of bits required for the description of the universal codebooks, $W$ can be trivially upper bounded by $K^2\log(K+1)$ thus clearly $\frac{W}{n}\stackrel{n\to\infty}{\longrightarrow}0$. 
In the previous subsection, we showed that $L$ is asymptotically upper bounded by $\bar{L}$, for $K_n$ 
defined in \eqref{eq:Kndef}. It is possible to show by steps similar to the ones used in the previous subsection that for a fixed $K$, $S_1$ and $S_2$ are asymptotically upper bounded by the following terms respectively
\begin{align}\label{eq:S1pK}
\bar{S}_1(p,K) &\dfn\sum_{k=1}^{K}p^2(1-p)^{k-1}H(\bar{\bs{\pi}}_k),\\
\bar{S}_2(p,K)&\dfn\sum_{k=K+1}^{n}p^2(1-p)^{k-1}\left\lceil\log(k+1)\right\rceil.
\end{align}
Thus the total description length can be written as
\begin{align}\label{eqLpK}
L(p,K)=\bar{S}_1(p,K)+\bar{S}_2(p,K)
\end{align}

We first note that we can trivially upper bound all $H(\bar{\bs{\pi}}_k)$ by $\log(k+1)$ yielding the following bound
\begin{align}\label{eq:Lcheck}
L(p,K)\leq {\check{L}}(p)\dfn\sum_{k=1}^\infty p^2(1-p)^{k-1}\log(k+1).
\end{align}

Let us now write $\bar{S}_1(p,L)$ as a convex optimization problem in $\mybf{a}$, and numerically evaluate its optimum. We recall that $\bar{\bs{\pi}}_k$ can be written in terms of $\mybf{a}$ as in \eqref{eq:pibarkl}. This relation can be stated in using matrix/vector notation by introducing the set of matrices $B_k$ with sizes $(k+1)\times n$ with the following element. For $1\leq i\leq k$ 
\begin{align}
\left[ B_k\right]_{i,j}=
\begin{cases}
1 & \text{for } j\geq i\\
0& \text{otherwise}
\end{cases}
\end{align}
and for $i=k+1$
\begin{align}
\left[ B_k\right]_{k+1,j}=
\begin{cases}
j-k & \text{for } j\geq k \\
0& \text{otherwise}
\end{cases}
\end{align}
The matrix $B_k$ can also be written as follows:
\begin{align}
B_k=
\kbordermatrix{&1&2&3&\cdots&k&k+1&k+2\\
1&1&1&1&\cdots&1&1&1&\cdots\\
2&0&1&1&\cdots&1&1&1&\cdots\\
3&0&0&1&\cdots&1&1&1&\cdots\\
\vdots&&&&\ddots\\
k&0&0&0&\cdots&1&1&1&\cdots\\
k+1&0&0&0&\cdots&0&1&2&\cdots
}
\end{align}

Recalling the definition of $\{a_m\}$ in \eqref{eq:amdef}, and taking into account that all the sequences of all lengths $m=1,\ldots,n$ construct the original sequence $\bs{\phi}^n$ (whose length is $n$) gives $\sum_{m=1}^{n}mna_m=n$ hence $\sum_{m=1}^{n}ma_m=1$. Also note that $a_m\geq 0$ for all $m=1,\ldots,n$. 

So, an upper bound for $\bar{S}_1(p,K)$ denoted by $\tilde{S}_1(p,K)$ can be computed as follows:
\begin{align}\label{eq:Ltildeopt}
\tilde{S}_1(p,K) =\max_{\mybf{a}}&\sum_{k=1}^{K}p^2(1-p)^{k-1}H\left(B_k\mybf{a}\right)\\
\text{{s.t.}  }& a_i\geq 0\quad \forall i, \quad \sum ia_i=1
\end{align}
We note that the constraints are convex and that the function to be maximized is the sum of the composition of convex function ($H(\cdot)$) with linear functions, hence is also convex.

A more convenient parameterization is obtained by normalizing $\mybf{a}$ to be a probability vector. To that end, $B_k$ should be replaced with $C_k$ as follows
\begin{align}\label{eq:Ckdef}
C_k=
\kbordermatrix{&1&2&3&\cdots&k&k+1&k+2\\
	1&1&\frac{1}{2}&\frac{1}{3}&\cdots&\frac{1}{k}&\frac{1}{k+1}&\frac{1}{k+2}&\cdots\\
	2&0&\frac{1}{2}&\frac{1}{3}&\cdots&\frac{1}{k}&\frac{1}{k+1}&\frac{1}{k+2}&\cdots\\
	3&0&0&\frac{1}{3}&\cdots&\frac{1}{k}&\frac{1}{k+1}&\frac{1}{k+2}&\cdots\\
	\vdots&&&&\ddots\\
	k&0&0&0&\cdots&\frac{1}{k}&\frac{1}{k+1}&\frac{1}{k+2}&\cdots\\
	k+1&0&0&0&\cdots&0&\frac{1}{k+1}&\frac{2}{k+2}&\cdots
}.
\end{align}
This yields the following optimization problem 
\begin{align}\label{eq:Ltildeoptprob}
\tilde{S}_1(p,K) =\max_{\mybf{a}}&\sum_{k=1}^{n}p^2(1-p)^{k-1}H\left(C_k\mybf{a}\right)\\
\text{{s.t.}  }& a_i\geq 0\quad \forall i, \quad \sum a_i=1
\end{align}

We would like evaluate $\tilde{S}_1(p,K)$ by numerically optimizing \eqref{eq:Ltildeoptprob}. It is clear that the length of the vector $\mybf{a}$ should be limited to some fixed value (denoted by $M$). 
We denote the reduced size vector by $\tilde{\mybf{a}}$ and derive it from $\mybf{a}$ by:
\begin{align}
\tilde{a}_i=
\begin{cases}
a_i,\quad &\text{ for } i=1,\ldots,M-1\\
\sum_{j=M}^na_j &\text{ for } i=M
\end{cases}
\end{align}
We also define $\tilde{C}_k$ by cutting only the first $M$ columns of $C_k$.
Noticing the definition of $C_k$ in \eqref{eq:Ckdef} and the fact that both $\mybf{a}$ and $\tilde{\mybf{a}}$ are probability vector gives the following bounds 
\begin{align}
\left[\tilde{C}_k\tilde{\mybf{a}}-C_k{\mybf{a}}\right]_i\leq \frac{\tilde{a}_{M}}{M}
\end{align}
for $i\in[1,k]$ and 
\begin{align}
\left|\left[\tilde{C}_k\tilde{\mybf{a}}-C_k{\mybf{a}}\right]_{k+1}\right|=\tilde{a}_M\left|\frac{M-k}{M}-1 \right|
=\frac{k\tilde{a}_{M}}{M}
\end{align}
Therefore, the variation distance is bounded by
\begin{align}
d_{\mathrm{TV}}(\tilde{C}_k\tilde{\mybf{a}},C_k{\mybf{a}})\leq \frac{2k\tilde{a}_{M}}{M}
\end{align}
Lemma~\ref{lemma:smoothH} requires that $d_{\mathrm{TV}}(\tilde{C}_k\tilde{\mybf{a}},C_k{\mybf{a}})\leq\frac{1}{2}$, so in order to comply we set $M=4K$ and obtain the bound:
\begin{align}
H(C_k{\mybf{a}}) <H(\tilde{C}_k\tilde{\mybf{a}}) -\frac{2k\tilde{a}_{M}}{M}\log\frac{2k\tilde{a}_{M}}{(k+1)M}.
\end{align}
Finally, the following finite-dimensional convex optimization problem provides a computable upper bound for $\check{S}_1(p,K,M) \geq \tilde{S}_1(p,K)$ that holds for any $n$ large enough:
\begin{align}
&\check{S}_1(p,K,M) \dfn \label{eq:tildeL0} \\
&\max_{\mybf{\tilde{a}}\in\mathbb{R}^M}\sum_{k=1}^{K}\left[p^2(1-p)^{k-1}H\left(\tilde{C}_k\tilde{\mybf{a}}\right)
-\frac{2k\tilde{a}_{M}}{M}\log\frac{2k\tilde{a}_{M}}{(k+1)M}\right]\\
&\text{{s.t.}  } \quad \tilde{a}_i\geq 0, \quad \sum_{i=1}^M \tilde{a}_i=1
\end{align}

and lastly
\begin{align}
L(p,K,M)&\leq \check{S}_1(p,K,M)+\\
&\sum_{k=K+1}^{n}p^2(1-p)^{k-1}\log(k+1).
\end{align}
We evaluated $\check{L}(p)$ and ${L}(p,K,M)$ for $K=100$ and $M=400$. The results are depicted in Fig.~\ref{fig:Lfig} including the trivial bound $h(p)$.
\begin{figure}
	\begin{center}
%
%
\begin{tikzpicture}

\begin{axis}[%
xmin=0.05,
xmax=0.5,
xlabel={$p$},
xmajorgrids,
ymin=0.1,
ymax=1,
ymajorgrids,
axis background/.style={fill=white},
legend style={at={(0.97,0.03)},anchor=south east,legend cell align=left,align=left,draw=white!15!black}
]
\addplot [color=black,dashdotted,line width=1.0pt]
  table[row sep=crcr]{%
0.05	0.286396957115956\\
0.1	0.468995593589281\\
0.15	0.6098403047164\\
0.2	0.721928094887362\\
0.25	0.811278124459133\\
0.3	0.881290899230693\\
0.35	0.934068055375491\\
0.4	0.970950594454669\\
0.45	0.992774453987808\\
0.5	1\\
};
\addlegendentry{$h(p)$};

\addplot [color=black,dashed,line width=1.0pt]
  table[row sep=crcr]{%
0.05	0.190081535391065\\
0.1	0.298903843234591\\
0.15	0.38282312949711\\
0.2	0.45245697324989\\
0.25	0.512479430310913\\
0.3	0.565471469964669\\
0.35	0.613043289425909\\
0.4	0.656280982722924\\
0.45	0.695958326542502\\
0.5	0.732649482117484\\
};
\addlegendentry{$\check{L}(p)$};

\addplot [color=black,solid,line width=1.0pt]
  table[row sep=crcr]{%
0.05	0.164390472136274\\
0.1	0.261611966338142\\
0.15	0.340238225671621\\
0.2	0.407544353380068\\
0.25	0.467083770389027\\
0.3	0.520856229656279\\
0.35	0.570125735149548\\
0.4	0.615756108469975\\
0.45	0.658370527375441\\
0.5	0.698438186764258\\
};
\addlegendentry{$\tilde{L}(p)$};

\end{axis}
\end{tikzpicture}%
	\end{center}
\caption{ $h(p)$, $\tilde{L}(p)$ and $\check{L}(p)$ as a function of $p$.\label{fig:Lfig}}
\end{figure}
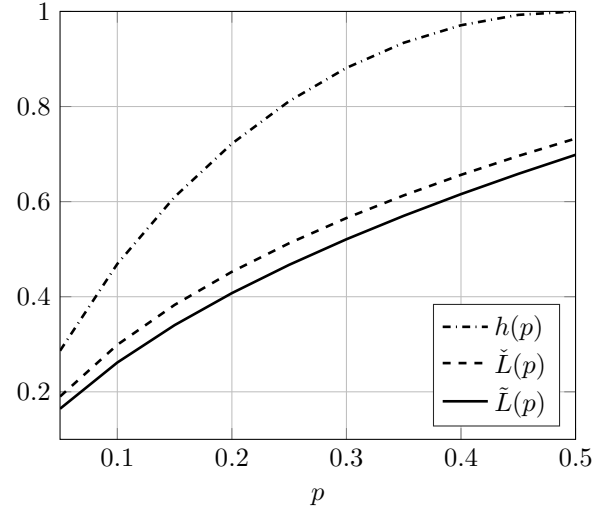

We are only left with relating ${L}(p,K,M)$ to $\ell(\eps,K,M)$.
We note that $p$ corresponds to the event of one of more errors on the channel between Alice and Bob and vice versa. So, $p=1-(1-\eps)^2=\eps(2-\eps)$ and
\begin{align}\label{eq:elldef}
\ell(\eps,K,M)={L}(\eps(2-\eps),K,M),
\end{align}
where $\tilde{L}(\cdot)$ is given in~\eqref{eq:tildeL0}. A simpler upper bound can be obtained using \eqref{eq:Lcheck}, 
\begin{align}\label{eq:ellcheck}
\sup_{K,M}\ell(\eps,K,M)\leq \check{\ell}(\eps) \dfn \check{L}(\eps(2-\eps)).
\end{align}

\bibliographystyle{IEEEbib}
\bibliography{bibtex_references}

\begin{thebibliography}{1}

\bibitem{schulman1996coding}
{L.~J.~Schulman},
\newblock ``Coding for interactive communication,''
\newblock {\em IEEE Transactions on Information Theory}, vol. 42, no. 6, pp.
  1745--1756, 1996.

\bibitem{kol2013interactive}
{G.~Kol and R.~Raz},
\newblock ``Interactive channel capacity,''
\newblock in {\em Proceedings of the forty-fifth annual ACM symposium on Theory
  of computing}. ACM, 2013, pp. 715--724.

\bibitem{haeupler2014interactive}
{B.~Haeupler},
\newblock ``Interactive channel capacity revisited,''
\newblock in {\em Foundations of Computer Science (FOCS), 2014 IEEE 55th Annual
  Symposium on}. IEEE, 2014, pp. 226--235.

\bibitem{haeupler2017bridging}
B.~Haeupler and A.~Velingker,
\newblock ``Bridging the capacity gap between interactive and one-way
  communication,''
\newblock in {\em Proceedings of the Twenty-Eighth Annual ACM-SIAM Symposium on
  Discrete Algorithms}. Society for Industrial and Applied Mathematics, 2017,
  pp. 2123--2142.

\bibitem{slepian1973noiseless}
{D.~Slepian and J.~Wolf},
\newblock ``Noiseless coding of correlated information sources,''
\newblock {\em IEEE Transactions on information Theory}, vol. 19, no. 4, pp.
  471--480, 1973.

\bibitem{VanHandel}
{R.~ Van Handel},
\newblock {\em Probability in High Dimension, ORF 570, Lecture notes, Princeton
  University},
\newblock 2014.

\bibitem{CsiszarKorner}
{I.~Csiszár and J.~Körner},
\newblock {\em Information Theory},
\newblock Cambridge University Press, second edition, 2011.

\end{thebibliography}

\end{document}